\date{January 23, 2013}
\newtheorem{theorem}{Theorem}
\def\cz{\mathbb{C}} 
\def\rz{\mathbb{R}} 
\def\bp{\mathbf{p}}
\def\bx{\mathbf{x}}
\def\by{\mathbf{y}}
\def\bA{\mathbf{A}}
\def\cE{\mathcal{E}}
\def\gA{\mathfrak{A}}
\def\gH{\mathfrak{H}}
\def\rd{\mathrm{d}}
\def\ri{\mathrm{i}}
\def\sgn{\mathrm{sgn}}
\begin{document}
\title[Bound on 2-$d$-Excess Charge]{On the Maximal Excess Charge of the Chandrasekhar-Coulomb Hamiltonian in Two Dimensions}

\author[M. Handrek]{Michael Handrek}
\author[H. Siedentop]{Heinz Siedentop}
\address{Mathematisches Institut\\
 Ludwig-Maximilians-Universit\"at M\"unchen\\
 Theresienstra\ss e 39\\ 80333 M\"unchen\\ Germany}
\subjclass{81Q10, 81Q37}
\keywords{quantum dots, maximal number of bound particles}
 \email{handrek@math.lmu.de  \textit{and} h.s@lmu.de}

\maketitle

\begin{abstract}
  We show that for the straightforward quantized relativistic Coulomb
  Hamiltonian of a two-dimensional atom -- or the corresponding
  magnetic quantum dot -- the maximal number of electrons does not
  exceed twice the nuclear charge. The result is then generalized to
  the presence of external magnetic fields and atomic
  Hamiltonians. This is based on the positivity of $$|\bx| T(\bp) +
  T(\bp) |\bx| $$ which -- in two dimensions -- is false for the
  non-relativistic case $T(\bp) = \bp^2/2$, but is proven in this paper
  for $T(\bp) = |\bp|$, i.e., the ultra-relativistic kinetic energy.
 \end{abstract}

\section{Introduction\label{ein}}

The energy of two-dimensional quantum systems interacting via three
dimensional Coulomb-like potentials like graphene in a background
magnetic field given by a vector potential $\gA$ is given by the
quadratic form
\begin{equation}
  \label{eq:form}
  \cE[\psi] = (\psi,W_n\psi)
\end{equation}
where $\psi$ is a sufficiently smooth element from the 
Hilbert space $\gH_\gA$ built from states above the Fermi level:
\begin{equation}
  \label{Einteilchen}
  \psi\in\gH_\gA:=\Lambda^+[L^2(\rz^2:\cz^2)] \;, \quad \Lambda^+:= \chi_{(0,\infty)}(D_\gA). 
\end{equation}
(Note that this model is physically relevant for large gaps and
relatively small interactions. For strong interactions and small gaps
a field theoretic description allowing for particle-hole creation and
annihilation is required (Paananen and Egger
\cite{PaananenEgger2011}).)  The operator
\begin{equation} \label{physikhamilton}
 W_N:= \sum_{n=1}^N \left(T_{\gA,n} +e\boldsymbol\sigma_n\cdot\bA(\bx_n) 
-e\varphi(\bx_n) \right) + \sum_{1\leq m<n\leq N}\frac{e^2}{|\bx_m-\bx_n|}
\end{equation}
is the multi-particle Weyl operator (massless Dirac operator) with
kinetic energy $T_\gA:=\boldsymbol\sigma\cdot(\bp+e\gA)$ where $-e$ is
the charge of the electron. The Hamiltonian is defined via the
quadratic form. The Hamiltonian has been used to describe
multi-particle effects of graphene quantum dots (see Egger et
al \cite{Eggeretal2010} where its basic mathematical properties have
been discussed). The electro-magnetic potentials $\varphi$ and $\gA$
are defining the quantum dot.

To be concrete we mention that the choice used by Egger et
al \cite{Eggeretal2010} (see also Paananen et
al \cite{Paananenetal2011}) would be allowed, namely to take
$\gA= \gA_0 + \bA_0$ and $\bA=0$ 
with
\begin{equation}
  \label{hintergrund}
  \gA_0(\bx) = \tfrac B2 
\begin{pmatrix} 
-x_2\\
x_1
\end{pmatrix},
\end{equation}
i.e., we have a homogeneous magnetic field of strength $B$ orthogonal
to the $x_1$-$x_2$-plane, and
\begin{equation}
  \label{magdot}
  \bA_0(\bx) := -\frac{ BR^2}{2|\bx|^2}
  \begin{pmatrix}
    -x_2\\x_1  \end{pmatrix}
  \begin{cases}
    {|\bx|^2\over R^2} & |\bx| \leq R\\
    1 & |\bx| > R,
  \end{cases}
\end{equation}
the potential that eliminates the magnetic field in a circle of radius
$R$ but leaves the field unchanged outside.  In total this is a
homogeneous magnetic field with a cavity of radius $R$ around the
origin.  As electric field we could choose $\varphi(\bx)=eZ/|\bx|$, i.e., the
potential of a defect atom placed at the origin.

An alternative approach would be to take $\gA=\gA_0$, i.e., define the
vacuum with respect to the homogeneous magnetic field, and to pick
$\bA=\bA_0$. Although for weak fields these two will yield similar
results they are not identical. Nevertheless, by the variational
principle for eigenvalues in gaps \cite{GriesemerSiedentop1999}, the
latter will bound the former from below. In addition it allows for a
more direct treatment of the excess charge problem. Because of this,
we will direct our attention on the second choice.

The corresponding quadratic form on
$\bigwedge_{n=1}^N \Lambda^+[C^\infty_0(\rz^2:\cz^2)]$ is bounded from
below, if and only if
$$Z\leq  \left({\Gamma(\tfrac14)^4\over8\pi^2}+ {8\pi^2\over\Gamma(\tfrac14)}\right)^{-1}
$$ 
independently of the strength of the field $\gA$
\cite{MaierSiedentop2012}. Egger et al \cite{Eggeretal2010} studied
also numerically how many electrons a quantum dot is able to bind in
the context of a mean-field model. However the question of bounding
the total number of electrons localized in the quantum dot was left
unanswered. In the sequel we will address this question, however not
for the no-pair Hamiltonian defined by the quadratic form of $W_N$ in
$\bigwedge_{n=1}^N\gH_\gA$ but for the corresponding Chandrasekhar
type operator $C_{\bA,\varphi,N}$, i.e., $W_N$ with $T_\gA$ replaced
by $|T_\gA|$ self-adjointly realized in $L^2(\rz^{3N}:\cz^{2^N})$ with
domain $H^1(\rz^{3N}:\cz^{2^N})$. Again by the variational principle,
the eigenvalues of this operator bound the eigenvalues of the no-pair
operator from below, since the quadratic form of the no-pair operator
is just a restriction of the Chandrasekhar case.

An a priori bound on the maximal number of electrons that can be bound
has been derived by Lieb \cite{Lieb1984} for relativistic and
non-relativistic Coulomb Hamiltonians in three dimensions. The idea of
the proof is to multiply the Schr\"odinger equation by $|\bx_1| \psi(x)$
and to integrate. It yields -- in the atomic case -- the bound $N <
2Z+1$ and rests on the inequality
\begin{equation}
  \label{grundl}
  0\leq |\bx| T + T |\bx|.
\end{equation}
For the non-relativistic kinetic $T=\bp^2/2$ \eqref{grund} is equivalent
to a Hardy inequality which is true in three dimensions but false in
two dimension. The case of the relativistic kinetic energy $T=|\bp|$
is reduced by Lieb \cite{Lieb1984} to the non-relativistic case. In
other words, it is not clear a priori whether \eqref{grund} holds. The
purpose of this paper is to show this inequality and apply it to the Hamiltonian 
$C_{\bA,\varphi,N}$ of the quantum dot.

Before we actually do this we remark:
\begin{itemize}
\item In three dimensions Nam\cite{Nam2012} improved Lieb's result for
  the non-relativistic Schr\"odinger operator and $Z\geq6$. He could
  show the bounds $N< 1.22 Z+3Z^{\tfrac1{3}}$ using similar ideas. 
\item In two dimension, the above failure of positivity of
  \eqref{grund} can be controlled using an idea of Seiringer
  \cite{Seiringer2001} and Nam and Solovej \cite{Nametal2012}: Since
  the eigenvalues of the two-dimensional hydrogen atoms
  (two-dimensional Kepler problem) are $-Z^2/2(n+1/2)^2$ (Fl\"ugge and
  Marschall \cite[Problem 24]{FluggeMarschall1965} each of
  multiplicity $2n+1$ ($e=1$). This gives $(\psi, |\bx_1|^{-1}\psi) \leq
  4\log(Z^{1/2})+10$ for the ground state of the $N$ particle system
  which in turn yields
  \begin{equation}
    \label{eq:2d}
    N\leq 2 Z  + \log(Z^{1/2}) + \frac{7}{2}.
  \end{equation}
 \end{itemize}

\section{Main Inequality\label{s:1}}
Our basic result is
\begin{theorem}
  Assume $ \gA \in L^2_{\mathrm{loc}}(\rz^2:\cz^2)$, $\bp:=
  -\ri \nabla$, and $T_{\mathbf{\gA}}:= |\bp+\gA|$, then 
  \begin{equation}
    \label{grund}
    |\bx| T_\gA
  + T_\gA |\bx|\geq0
  \end{equation}
  on $C_0^\infty(\rz^d)$.
\end{theorem}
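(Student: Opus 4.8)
The plan is to prove \eqref{grund} via the magnetic analogue of the Caffarelli--Silvestre harmonic extension, thereby reducing it to a weighted Hardy inequality in dimension $d+1$. Since $\gA\in L^2_{\mathrm{loc}}$, the magnetic Laplacian $-\Delta_\gA$ --- the non-negative self-adjoint operator generated by the closure on $C_0^\infty$ of $u\mapsto\int_{\rz^d}|(\bp+\gA)u|^2$ --- is well defined, $T_\gA=(-\Delta_\gA)^{1/2}$, and $C_0^\infty(\rz^d)\subset\mathrm{Dom}(T_\gA)$. For $h\in\mathrm{Dom}(T_\gA)$ I would set $H(\,\cdot\,,y):=e^{-yT_\gA}h$ on $\rz_+^{d+1}:=\rz^d\times(0,\infty)$; then $H$ solves $(-\Delta_\gA-\partial_y^2)H=0$ with $H(\,\cdot\,,0)=h$ and $-\partial_yH(\,\cdot\,,0)=T_\gA h$, and the spectral theorem yields, for $f,g\in\mathrm{Dom}(T_\gA)$ with extensions $U,G$,
\[
\langle g,T_\gA f\rangle=\int_{\rz_+^{d+1}}\Bigl(\overline{(\bp+\gA)G}\cdot(\bp+\gA)U+\overline{\partial_yG}\,\partial_yU\Bigr),
\]
the $(d{+}1)$-dimensional magnetic Dirichlet form of the extensions.

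Next I would insert $g=|\bx|f$ with $f\in C_0^\infty(\rz^d)$. Since $|\bx|\notin L^2$, instead of the naive ansatz $\rho\,U$ with $\rho:=\sqrt{|\bx|^2+y^2}$ (whose trace on $\{y=0\}$ is $|\bx|f$) I would use the truncation $\rho_R:=\min(\rho,R)$ with $R$ large enough that $\rho_R\equiv|\bx|$ on $\operatorname{supp}f$. Then $\rho_R\,U$ has the same trace as $G$ and lies in the $(d{+}1)$-dimensional energy space, using that $U,(\bp+\gA)U,\partial_yU\in L^2(\rz_+^{d+1})$ for $d\ge2$ (by functional calculus and the diamagnetic inequality; e.g.\ $\int_{\rz_+^{d+1}}|\partial_yU|^2=\tfrac12\langle f,T_\gA f\rangle$). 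Hence $G-\rho_R\,U$ vanishes on $\{y=0\}$, and because $U$ solves the extension equation, an integration by parts makes the contribution of $G-\rho_R\,U$ to the displayed identity vanish; so $\langle|\bx|f,T_\gA f\rangle=\int_{\rz_+^{d+1}}\bigl(\overline{(\bp+\gA)(\rho_RU)}\cdot(\bp+\gA)U+\overline{\partial_y(\rho_RU)}\,\partial_yU\bigr)$.

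Then comes the algebraic core. Expanding, e.g., $(\bp+\gA)(\rho_RU)=(\bp\rho_R)U+\rho_R(\bp+\gA)U$, using that $\rho_R$ is real, and exploiting the cancellation $2\,\mathrm{Re}\bigl(\ri\,\overline U(\bp+\gA)U\bigr)=\nabla|U|^2$ of the vector potential, one obtains
\[
2\,\mathrm{Re}\,\langle|\bx|f,T_\gA f\rangle=\int_{\rz_+^{d+1}}\nabla\rho_R\cdot\nabla|U|^2+2\int_{\rz_+^{d+1}}\rho_R\bigl(|(\bp+\gA)U|^2+|\partial_yU|^2\bigr).
\]
The second integral is $\ge0$. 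For the first I integrate by parts in $\rz_+^{d+1}$: the boundary term on $\{y=0\}$ drops since $\partial_y\rho_R|_{y=0}=0$, and $-\Delta\rho_R=-\tfrac{d}{\rho}\chi_{\{\rho<R\}}+\mu_R$ with $\mu_R\ge0$ a surface measure on $\{\rho=R\}$ (the kink of $\min$ is concave), using $\Delta_{\rz^{d+1}}\rho=d/\rho$. Discarding $\mu_R$, enlarging the remaining domain to all of $\rz_+^{d+1}$, and letting $R\to\infty$ (monotone convergence, $\rho_R\uparrow\rho$; finiteness of $\int_{\rz_+^{d+1}}\rho\,|(\bp+\gA)U|^2$ is then automatic) gives
\[
2\,\mathrm{Re}\,\langle|\bx|f,T_\gA f\rangle\ \ge\ 2\int_{\rz_+^{d+1}}\rho\bigl(|(\bp+\gA)U|^2+|\partial_yU|^2\bigr)-d\int_{\rz_+^{d+1}}\frac{|U|^2}{\rho}.
\]

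Finally, the diamagnetic inequality gives $|(\bp+\gA)U|^2+|\partial_yU|^2\ge|\nabla|U||^2$ pointwise (the gradient now being that of $\rz^{d+1}$), so it remains to establish $\int_{\rz_+^{d+1}}\rho\,|\nabla|U||^2\ge\tfrac{d^2}{4}\int_{\rz_+^{d+1}}\tfrac{|U|^2}{\rho}$. This follows by reflecting $|U|$ evenly in $y$ and applying the classical weighted Hardy inequality $\int_{\rz^{d+1}}|z|\,|\nabla w|^2\ge\bigl(\tfrac{(d+1)-2+1}{2}\bigr)^2\int_{\rz^{d+1}}\tfrac{|w|^2}{|z|}$, whose constant is $(d/2)^2$ and which needs no vanishing of $w$ at the origin because $d\ge1$. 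Combining,
\[
2\,\mathrm{Re}\,\langle|\bx|f,T_\gA f\rangle\ \ge\ \Bigl(\tfrac{d^2}{2}-d\Bigr)\int_{\rz_+^{d+1}}\frac{|U|^2}{\rho}=\frac{d(d-2)}{2}\int_{\rz_+^{d+1}}\frac{|U|^2}{\rho}\ \ge\ 0\qquad(d\ge2),
\]
which is \eqref{grund}. The main obstacle is the technical bookkeeping of the last three paragraphs: the truncation forced by $|\bx|\notin L^2$, the decay of $U=e^{-yT_\gA}f$ and its derivatives needed to legitimise the integrations by parts for a merely $L^2_{\mathrm{loc}}$ vector potential (which one can, if necessary, reduce to smooth $\gA$ by approximation), and the density step in the weighted Hardy inequality. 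Note that the estimate is exactly critical at $d=2$ --- mirroring the borderline failure of the non-relativistic inequality there --- and genuinely fails for $d=1$.
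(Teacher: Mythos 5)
Your proposal is correct in outline and takes a genuinely different route from the paper. The paper first removes the vector potential altogether via Kato's (diamagnetic) inequality for the Chandrasekhar operator, and then works entirely in $d$ dimensions with the Lieb--Yau position-space representation of $(\psi,|\bp|\psi)$: substituting $\psi=g/|\cdot|^{d/2}$, symmetrizing in $\bx,\by$, and reducing positivity to the elementary monotonicity of $r^\alpha+r^{-\alpha}$ in $\alpha$; this computation also yields the explicit constant $2\alpha_d\gamma_d$ exhibiting the sign change at $d=2$. You instead keep $\gA$ throughout, pass to the Poisson (Caffarelli--Silvestre) extension $U=e^{-yT_\gA}f$ on $\rz^d\times(0,\infty)$, test the Dirichlet-form identity with the truncated competitor $\rho_R U$, and after the integration by parts the issue becomes the weighted Hardy inequality with weight $|z|$ in $d+1$ dimensions, whose constant $(d/2)^2$ dominates the drift term $d$ precisely for $d\geq 2$; the magnetic field enters only through the pointwise diamagnetic inequality (plus the semigroup diamagnetic bound needed for $U\in L^2$, i.e.\ $\langle f,T_\gA^{-1}f\rangle<\infty$ for $d\geq2$). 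Your route is more conceptual: it explains the borderline character of $d=2$ through the critical Hardy constant, parallels Lieb's three-dimensional reduction but one dimension up, and produces the nonnegative remainder $\tfrac{d(d-2)}{2}\int|U|^2/\rho$; the paper's route is more elementary and self-contained (one explicit kernel manipulation) and gives the quantitative bound $|\bx||\bp|+|\bp||\bx|\geq 2\alpha_d\gamma_d$. The technical points you defer (the truncation $\rho_R$ forced by $|\bx|\notin L^2$, membership of $\rho_R U$ in the energy space, vanishing of the boundary term at $y=0$, the sign of the singular part of $\Delta\rho_R$, and the density step in the weighted Hardy inequality) are all standard and your sketches of them check out, so I see no gap.
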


\begin{proof}
  By the diamagnetic inequality (Theorem \ref{diamagnetic})
  \begin{equation}
    \label{diamagnetic}
  (\eta, |\bp| |\phi|) \leq \Re (\eta, \sgn(\phi)
  |\bp+\gA|\phi)
  \end{equation}
  we see that $ |\bx||\bp| + |\bp||\bx| \geq 0$ implies
  $|\bx||\bp+\gA| + |\bp+\gA||\bx| \geq0$. In other words, it suffices
  to prove positivity without the background magnetic field.
  Moreover: it suffices to consider $\phi\geq0$.

  Next we recall the following representation of $|\bp|$ (Lieb and Yau
  \cite{LiebYau1988}) in position space
  \begin{equation}
    \label{eq:betrag}
    (\psi,|\bp|\psi) = \alpha_d \int_{\rz^d}\rd \bx \int_{\rz^d}\rd \by 
    {(\overline{\psi(\bx)}-\overline{\psi(\by)})(\psi(\bx)-\psi(\by))\over |\bx-\by|^{d+1}}
  \end{equation}
 with $\alpha_d=\Gamma(\tfrac{d+1}2)/(2\pi^{d+1\over2})$.
  By polarization of \eqref{eq:betrag} Inequality \eqref{grund} is then equivalent to 
  \begin{equation}
    \label{eq:4}
    0\leq t:=\Re\int_{\rz^d}\rd \bx \int_{\rz^d}\rd \by 
    {(\overline{\psi(\bx)}-\overline{\psi(\by)})(|\bx|\psi(\bx)-|\by|\psi(\by))\over|\bx-\by|^{d+1}}
  \end{equation}
  where we restrict to non-negative $\psi$ because of the above remark.
Now, setting $\psi= g / |.|^{d/2}$ and regularizing to avoid the singularity at $\bx=\by$ we get
\begin{align}
  \label{bound}
  \begin{split}
    t =&\lim_{\epsilon\to 0} \Re \int_{\rz^d} \rd \bx \int_{\rz^d} \rd \by  {{|\bx|g(\bx)^2\over|\bx|^d}+{|\by|g(\by)^2\over|\by|^d}-g(\bx)g(\by){|\by| + |\bx|\over|\bx|^{d/2} |\by|^{d/2} } \over |\bx-\by|^{d+1}+2^{\frac{d+1}{2}}\epsilon(|\bx|^{d+1}+|\by|^{d+1})}\\
    =&\lim_{\epsilon\to 0} \int_{\rz^d} \rd \bx
    {g(\bx)^2\over|\bx|^d}\int_{\rz^d} \rd \by {2 - |\by|^{1-d/2} -|\by|^{-d/2}
      \over|\mathfrak{e}-\by|^{d+1}+2^{\frac{d+1}{2}}\epsilon(1+|\by|^{d+1})}\\
    &+ \frac12\int_{\rz^d} \rd \bx \int_{\rz^d} \rd \by {(|\bx|+|\by|)(g(\bx)-g(\by))^2\over |\bx|^{d/2}|\bx-\by|^{d+1}|\by|^{d/2}}\\
    \geq& \int_{\rz^d} \rd \bx {g(\bx)^2\over|\bx|^d} \int_0^\infty {\rd r
      r^d\over r}
    (2 -r^{1-d/2} - r^{-d/2} )(2r)^{-{d+1\over2}}\\
    &\cdot \int_{\mathbb{S}^{d-1}}{\rd \omega\over
      ({r+r^{-1}\over2} - \omega \mathfrak{e})^{{d+1\over2}}+\epsilon(\frac1{r}^{{d+1\over2}}+r^{{d+1\over2}})}\\
    \geq& 2^{-{d+1\over2}}\int_{\rz^d} \rd \bx {g(\bx)^2\over|\bx|^d}
    \int_0^\infty {\rd r \over r}
    (2r^{d-1\over2} -r^{1/2} - r^{-1/2}) \\
    &\cdot \int_{\mathbb{S}^{d-1}}{\rd \omega\over
      ({r+r^{-1}\over2} - \omega \mathfrak{e})^{{d+1\over2}}+\epsilon(\frac1{r}^{{d+1\over2}}+r^{{d+1\over2}})}\\
    \geq& 2^{-{d+1\over2}}\int_{\rz^d} \rd \bx {g(\bx)^2\over|\bx|^d}
    \int_0^\infty {\rd r\over r}
    \underbrace{[r^{d-1\over 2} + r^{-{d-1\over2}} - (r^{1\over2}+r^{-{1\over2}})]}_{\geq0} \\
    &\cdot \int_{\mathbb{S}^{d-1}}{\rd \omega\over
      ({r+r^{-1}\over2} - \omega \mathfrak{e})^{{d+1\over2}}+\epsilon(\frac1{r}^{{d+1\over2}}+r^{{d+1\over2}})}
    \geq 0
 \end{split}
\end{align}
where $\mathfrak{e}$ is any unit vector in $\rz^3$. The positivity of
the bracket follows from the fact that the function $f(\alpha) :=
r^\alpha +r^{-\alpha}$ is strictly monotone increasing for positive
$r$.
\end{proof}
We remark that the above proof also shows that $d=2$ is borderline for
positivity. In fact,

$$ |\bx||\bp| + |\bp||\bx| \geq 2\alpha_d\gamma_d$$
where
\begin{multline}
	\gamma_d=2^{-{d-1\over2}} \int_0^1 {\rd r\over r}
  [r^{d-1\over 2} + r^{-{d-1\over2}} - (r^{1\over2}+r^{-{1\over2}})]\int_{\mathbb{S}^{d-1}}{\rd \omega\over
    ({r+r^{-1}\over2} - \omega \mathfrak{e})^{{d+1\over2}}}
\end{multline}
which changes sign at $d=2$ whereas $|\bx||\bp|+|\bp||\bx|\geq 1$ for $d=3$.

\section{Application to $2d$ Quantum Dots\label{s2}}

In this section we consider a $2d$ quantum dot given by the
Hamiltonian
\begin{equation} \label{model}
  C_{A,\varphi,N}:=\sum_{n=1}^N \left[ |\bp+\gA|_n +e\boldsymbol\sigma_n\cdot\bA(\bx_n) 
-e\varphi(\bx_n) \right] + \sum_{1\leq m<n\leq N}\frac{e^2}{|\bx_m-\bx_n|},
\end{equation}
a simplified model of \eqref{physikhamilton}, self-adjointly realized
in $\gH_N := \bigwedge_{n=1}^NL^2(\rz^2:\cz^2)$. Here $\gA$ is the
background magnetic field \eqref{hintergrund} and $\bA$ is the
magnetic field defining the dot. Furthermore, we allow for an
attractive essentially spherically symmetric attractive potential. We
have
\begin{theorem}
  \label{th:ueberschuss}
  Assume $\gA\in L^2_{\mathrm{loc}}(\rz^2:\rz^2)$ and $|\bA(\bx)|\leq
  e\delta/|\bx|$, $\varphi(\bx)\leq eZ/|\bx|$, $Z \in
  [0,\kappa_k]$. Assume that $C_{\bA,\varphi,N}$ has a ground state
  with ground state energy $E_N$ below the saturation threshold,
  i.e., $E_N< E_{N-1}$. Then
$$N < 2(\delta+Z) +1.$$
\end{theorem}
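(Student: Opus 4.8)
The plan is to follow Lieb's multiplication strategy, now powered by the Main Inequality \eqref{grund}. Let $\psi$ be the normalized ground state of $C_{\bA,\varphi,N}$ with energy $E_N$, so that $C_{\bA,\varphi,N}\psi = E_N\psi$ in the form sense. First I would test the eigenvalue equation against $|\bx_1|\psi$: since $\psi$ is in the form domain and $|\bx_1|\psi$ has enough decay (after a cutoff $|\bx_1|\wedge L$ which I send to infinity at the end, using that $E_N<E_{N-1}$ forces exponential decay of the ground state — this is the standard Persson-type argument), we get
\begin{equation}
  \label{eq:tested}
  E_N\,(\psi,|\bx_1|\psi) = \Re\bigl(\psi, |\bx_1| C_{\bA,\varphi,N}\psi\bigr).
\end{equation}
Then I would symmetrize the right-hand side. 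Write $C_{\bA,\varphi,N} = |\bp+\gA|_1 + R_1$, where $R_1$ collects the remaining kinetic/magnetic terms for particles $2,\dots,N$, the spin-magnetic terms $e\boldsymbol\sigma_n\cdot\bA$, the potentials $-e\varphi(\bx_n)$, and the full interaction. For the first-particle kinetic term, $\Re(\psi,|\bx_1|\,|\bp+\gA|_1\psi) = \tfrac12(\psi,(|\bx_1|\,|\bp+\gA|_1 + |\bp+\gA|_1\,|\bx_1|)\psi) \geq 0$ by Theorem 1 (applied in the $\bx_1$ variable with the other variables as parameters). So \eqref{eq:tested} gives the one-sided bound $E_N(\psi,|\bx_1|\psi) \geq \Re(\psi,|\bx_1| R_1\psi)$.

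Next I would estimate $\Re(\psi, |\bx_1| R_1\psi)$ from below term by term. The terms involving only particles $2,\dots,N$ — their kinetic energies, their spin-magnetic couplings, the pair interactions among them — commute with $|\bx_1|$, so by symmetry of $\psi$ under particle exchange they contribute $(|\bx_1|\psi, (\text{rest})\psi)$, which upon relabelling and using $E_{N-1}$ as a lower bound for that $(N-1)$-particle sub-Hamiltonian gives roughly $E_{N-1}(\psi,|\bx_1|\psi)$; more precisely one uses $(\psi, |\bx_1| H^{(N-1)}_{2,\dots,N}\psi) \geq E_{N-1}(\psi,|\bx_1|\psi)$ since $|\bx_1|^{1/2}\psi/\|\cdot\|$ is a trial state for the $(N-1)$-body operator in the variables $\bx_2,\dots,\bx_N$. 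The genuinely new contributions are: (i) the spin-magnetic term for particle $1$, bounded using $\boldsymbol\sigma_1\cdot\bA(\bx_1) \geq -|\bA(\bx_1)| \geq -e\delta/|\bx_1|$, so $\Re(\psi, |\bx_1|\, e\boldsymbol\sigma_1\cdot\bA(\bx_1)\psi) \geq -e^2\delta(\psi,\psi) = -e^2\delta$; (ii) the potential for particle $1$, with $-e\varphi(\bx_1)\geq -e^2Z/|\bx_1|$, giving $\geq -e^2 Z$; (iii) the interaction terms $e^2/|\bx_1-\bx_n|$ for $n\geq 2$ — each is nonnegative, but I need an upper bound to control the correction to $E_{N-1}$. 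Here I would use $|\bx_1|/|\bx_1-\bx_n| \leq 1 + |\bx_n|/|\bx_1-\bx_n|$ and symmetrize the $|\bx_n|/|\bx_1-\bx_n|$ part pairwise (the standard trick: $|\bx_1|/|\bx_1-\bx_n| + |\bx_n|/|\bx_1-\bx_n| \leq |\bx_1| + |\bx_n| + $ nonneg, or rather the symmetric decomposition used by Lieb), to fold the $|\bx_n|$-pieces back into the $(N-1)$-body energy and keep only $(N-1)\times e^2$ from the "$1$" pieces — that is, $\sum_{n\geq 2}(\psi, |\bx_1|\,e^2|\bx_1-\bx_n|^{-1}\psi)$ contributes at most $(N-1)e^2(\psi,\psi)$ beyond what is absorbed. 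Assembling, $E_N(\psi,|\bx_1|\psi) \geq E_{N-1}(\psi,|\bx_1|\psi) + e^2\bigl[(N-1) - 2\delta - 2Z\bigr]$ (after dividing out $e^2$-scaling; note all $e$'s cancel consistently), and since $E_N < E_{N-1}$ and $(\psi,|\bx_1|\psi)>0$, the bracket must be negative: $N-1 < 2\delta+2Z$, i.e., $N < 2(\delta+Z)+1$.

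The main obstacle is the justification of testing against $|\bx_1|\psi$: $|\bx_1|\psi$ need not lie in the form domain a priori, so I would work with the bounded cutoff $f_L(\bx_1) := |\bx_1|(1+|\bx_1|/L)^{-1}$, derive the inequality with $f_L$ in place of $|\bx_1|$ — checking that Theorem 1 still applies since the diamagnetic/integral-kernel argument goes through for $f_L$ in place of $|\bx|$ as long as $f_L$ is Lipschitz (or one replaces $f_L$ by a smooth monotone approximant), wait: actually Theorem 1 is stated specifically for $|\bx|$, so the cleaner route is to note $f_L(\bx_1)|\bp+\gA|_1 + |\bp+\gA|_1 f_L(\bx_1) \geq 0$ follows from $|\bx_1||\bp+\gA|_1+|\bp+\gA|_1|\bx_1|\geq0$ by an operator-monotonicity/commutator argument, or simply re-run the kernel computation of Theorem 1 with the substitution $\psi = g/f_L^{1/2}\cdots$; in practice I expect the integrand-level identity in \eqref{bound} to survive with $|\bx|$ replaced by any nonnegative function $h$ with $h(\bx)+h(\by) \geq 2h(\tfrac{\bx+\by}{2})$-type midpoint behavior, which $f_L$ satisfies. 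Then $E_N < E_{N-1}$ and the decay of the ground state let me pass $L\to\infty$ by monotone convergence, since $(\psi, f_L(\bx_1)\psi)\uparrow(\psi,|\bx_1|\psi) \leq \infty$ and the inequality is preserved; if the limit is $+\infty$ the argument still forces the bracket $\leq 0$. The second, milder subtlety is the exact bookkeeping of the interaction-symmetrization constant — getting exactly $(N-1)$ rather than something larger — which is where Lieb's original three-dimensional argument must be transcribed carefully, but it is purely algebraic and dimension-independent.
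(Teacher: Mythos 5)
Your skeleton is exactly the paper's: test the eigenvalue equation against $|\bx_1|\psi$, discard the kinetic term of particle $1$ by the positivity \eqref{grund}, bound the remaining $(N-1)$-body operator from below by $E_{N-1}$, and let the assumptions on $\bA$ and $\varphi$ contribute $-e^2\delta$ and $-e^2Z$. The step that is genuinely garbled is the interaction term, and as written your assembled inequality does not follow from your own estimates. The terms $e^2(\psi,|\bx_1|\,|\bx_1-\bx_n|^{-1}\psi)$ appear with a plus sign on the side you are bounding from below, so you need a \emph{lower} bound on them; your inequality $|\bx_1|/|\bx_1-\bx_n|\leq 1+|\bx_n|/|\bx_1-\bx_n|$ points in the useless direction, and ``folding the $|\bx_n|$-pieces back into the $(N-1)$-body energy'' corresponds to no actual estimate, since $|\bx_n|/|\bx_1-\bx_n|$ is not part of $C_{\bA,\varphi,N-1}$. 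Moreover the constant you claim, $(N-1)e^2$ from the single-coordinate test, is not available: by the exchange symmetry of $|\psi|^2$ one has $(\psi,\tfrac{|\bx_1|}{|\bx_1-\bx_n|}\psi)=\tfrac12(\psi,\tfrac{|\bx_1|+|\bx_n|}{|\bx_1-\bx_n|}\psi)\geq\tfrac12$, so a single coordinate yields only $(N-1)/2$. Combined with the $-(\delta+Z)$ you actually derived, this gives $\tfrac{N-1}{2}<\delta+Z$, which is the theorem; instead you pair a doubled interaction constant $(N-1)$ with a doubled potential cost $2(\delta+Z)$ that appears nowhere in your term-by-term estimates, and the two factor-of-two slips cancel only by accident. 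The paper's bookkeeping, which you should adopt, is to stop the single-coordinate argument at $e^2(|\bx_1|\psi,[-(\delta+Z)/|\bx_1|+\sum_{n\neq1}|\bx_1-\bx_n|^{-1}]\psi)<0$ as in \eqref{16}, repeat it for every coordinate, sum, and use the triangle inequality $\tfrac{|\bx_m|+|\bx_n|}{|\bx_m-\bx_n|}\geq1$ to get $-(\delta+Z)N+\tfrac{N(N-1)}{2}<0$.

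A secondary point: the paper justifies the multiplication by $|\bx_1|\psi$ through a density argument (approximation of $\psi$ by $C_0^\infty$ functions), not through a cutoff $f_L(\bx_1)=|\bx_1|(1+|\bx_1|/L)^{-1}$. Your proposal to re-run the proof of Theorem 1 with $f_L$ in place of $|\bx|$ is not obviously workable: the computation in \eqref{bound} uses the homogeneity of $|\bx|$ in an essential way (the substitution $\psi=g/|\cdot|^{d/2}$ and the reduction to the radial ratio), and a vague ``midpoint'' property of $f_L$ is not shown to suffice. If you prefer the cutoff route you would have to prove the corresponding positivity for $f_L$ separately; otherwise follow the paper and regularize $\psi$ rather than the weight.
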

Note that our bound -- in the absence of an electric potential --
grows linearly in the missing magnetic flux $\mu$ which, in the
case of \eqref{magdot}, equals $\mu =e\delta= R^2Be/2$.

Equipped with Inequality \eqref{grund} the proof follows now the lines
of Lieb's original proof.

\begin{proof}
  Assume that there is ground state $\psi$ of $C_{\bA,\varphi,N}$, i.e.,
  \begin{equation}
  \label{eq:eigen}
  C_{\bA,\varphi,N}\psi= E_N\psi
\end{equation}
with $E_N$.  We begin by singling out the first coordinate and
multiplying \eqref{eq:eigen} by $|\bx_1|\psi$ and obtain using
\eqref{grund} and a standard density argument (approximating $\psi$ by
$C^\infty_0$ functions)
  \begin{multline}\label{16}
     e^2(|\bx_1|\psi,[-(\delta+Z)/|\bx_1|+ \sum_{n\neq1}
    {1\over|\bx_1-\bx_n|}]\psi)\\
    < (|\bx_1|\psi,  [C_{\bA,\varphi,1} - e^2\sum_{n\neq1}
    |\bx_1-\bx_n|^{-1}]\psi + 1\otimes
    (C_{\bA,\varphi,N-1} - E_{N-1})\psi)\\
    =(E_N-E_{N-1})(|\bx_1|\psi,\psi) < 0.
  \end{multline}
  Repeating the same argument for all the other coordinates
  $\bx_2,...,\bx_n$ and summing the result, gives
  \begin{equation}
    -(\delta+Z)N + \frac12 \sum_{m,n=1\atop m\neq n}^N{|\bx_m|+|\bx_n|\over |\bx_m-\bx_n|}<0.
  \end{equation}
  Thus the triangle inequality implies 
  $$ -(\delta+Z)N < {N^2-N\over 2}$$
  which is the claimed result.
\end{proof}

\appendix
\section{Auxiliary Results\label{a1}}
\begin{theorem}
  For $\phi, \eta\in H^1(\rz^d)$, $\eta\geq0$, $m\in\rz_+$, $T_m(\bp)
  = \sqrt{\bp^2+m^2} - m$, $\gA\in L^2_{\mathrm{loc}}(\rz^3)$ and
  $\bp=-\ri \nabla$ we have
  \begin{equation}
    \label{diamagnetic}
  (\eta, T_m(\bp) |\phi|) \leq \Re (\eta, \sgn(\phi)
  T_m(\bp+\gA)\phi).
  \end{equation}
\end{theorem}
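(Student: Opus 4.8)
The plan is to reduce \eqref{diamagnetic} to the classical pointwise diamagnetic inequality for the magnetic heat semigroup via the subordination (Laplace) formula
\begin{equation*}
  \sqrt{\lambda+m^2}-m=\frac1{2\sqrt\pi}\int_0^\infty\frac{\rd t}{t^{3/2}}\,e^{-tm^2}\bigl(1-e^{-t\lambda}\bigr),\qquad\lambda\geq0,
\end{equation*}
which is checked by integration by parts and $\Gamma(\tfrac12)=\sqrt\pi$. Read as an identity of sesquilinear forms, insert it with $\lambda=\bp^2$ on the left of \eqref{diamagnetic} and with $\lambda=(\bp+\gA)^2$ (the magnetic Schr\"odinger form, legitimate since $\gA\in L^2_{\mathrm{loc}}$) on the right; using $\sgn(\phi)\,\phi=|\phi|$ pointwise, the ``$1$''-terms cancel and \eqref{diamagnetic} reduces to
\begin{equation*}
  \int_0^\infty\frac{\rd t}{t^{3/2}}\,e^{-tm^2}\Bigl[\bigl(\eta,e^{-t\bp^2}|\phi|\bigr)-\Re\bigl(\eta,\sgn(\phi)\,e^{-t(\bp+\gA)^2}\phi\bigr)\Bigr]\geq0,
\end{equation*}
so it suffices to prove the bracket nonnegative for each fixed $t>0$.

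That in turn is immediate from the heat-kernel diamagnetic inequality: since $\eta\geq0$ and $|\sgn(\phi)|\leq1$ pointwise,
\begin{equation*}
  \Re\bigl(\eta,\sgn(\phi)\,e^{-t(\bp+\gA)^2}\phi\bigr)\leq\bigl(\eta,\bigl|e^{-t(\bp+\gA)^2}\phi\bigr|\bigr)\leq\bigl(\eta,e^{-t\bp^2}|\phi|\bigr),
\end{equation*}
the last step being the classical pointwise bound $\bigl|e^{-t(\bp+\gA)^2}\phi\bigr|\leq e^{-t\bp^2}|\phi|$ (Kato's inequality; Simon; Hess--Schrader--Uhlenbrock), which holds precisely under the hypothesis $\gA\in L^2_{\mathrm{loc}}$. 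Integrating against the positive weight $\tfrac1{2\sqrt\pi}\,t^{-3/2}e^{-tm^2}$ gives \eqref{diamagnetic}.

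The one genuinely delicate point is justifying the subordination step and the accompanying Fubini exchange in the stated $H^1$/$L^2_{\mathrm{loc}}$ generality, which is why one must work with sesquilinear forms rather than with the operators applied to $\phi$: in the spectral representation the inner $t$-integral produces exactly $2\sqrt\pi(\sqrt{\lambda+m^2}-m)\leq2\sqrt\pi\sqrt\lambda$ as a function of $\lambda$, which is integrable against the relevant (free, resp.\ magnetic) spectral measure because $|\phi|\in H^1$, so the double integral converges absolutely and its value is the left-hand side of \eqref{diamagnetic}. When the right-hand side of \eqref{diamagnetic} is $+\infty$ there is nothing to prove; in the application of the Main Inequality $\phi\in C_0^\infty$, so $\gA\phi\in L^2$ and all operators act in the ordinary sense. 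Taking $m\downarrow0$ finally recovers the form $(\eta,|\bp|\,|\phi|)\leq\Re(\eta,\sgn(\phi)\,|\bp+\gA|\phi)$ used in Section~\ref{s:1}. Beyond invoking the heat-semigroup diamagnetic inequality correctly in this generality, I do not expect any obstacle.
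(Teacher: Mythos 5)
Your proof is correct, but it takes a different route from the paper only in the sense that the paper offers no argument at all for this theorem: it is stated and justified purely by citation (Simon \cite{Simon1979K}, K\"onenberg et al.\ \cite{Konenbergetal2012}). What you write out is essentially the standard argument behind those references: since $\lambda\mapsto\sqrt{\lambda+m^2}-m$ is a Bernstein function, your subordination formula (which is correct) exhibits $T_m$ as a positive $t$-mixture of $1-e^{-t\lambda}$, and the classical semigroup domination $|e^{-t(\bp+\gA)^2}\phi|\le e^{-t\bp^2}|\phi|$ for $\gA\in L^2_{\mathrm{loc}}$ then transfers to $T_m$ after pairing with $\eta\ge 0$ and using $|\sgn(\phi)|\le 1$, $\sgn(\phi)\phi=|\phi|$; your Tonelli justification (the inner $t$-integral reproduces $2\sqrt\pi(\sqrt{\lambda+m^2}-m)\le 2\sqrt\pi\sqrt\lambda$, integrable against the relevant spectral measures) is exactly the right way to legitimize the exchange, and it also shows the argument works directly at $m=0$, so the final limit $m\downarrow 0$ is not really needed. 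Two points deserve tightening, though neither is fatal: on the magnetic side both the absolute convergence and the very meaning of the right-hand side require $\phi$ to lie in the form domain of $(\bp+\gA)^2$, i.e.\ $(\bp+\gA)\phi\in L^2$, which does \emph{not} follow from $\phi\in H^1$ and $\gA\in L^2_{\mathrm{loc}}$ alone -- so ``integrable because $|\phi|\in H^1$'' is the right reason only for the free side, and the phrase ``when the right-hand side is $+\infty$ there is nothing to prove'' is loose, since that side is not a priori one-signed; better to say the inequality is asserted whenever its right-hand side is defined, which covers the only use made of it in Section~\ref{s:1}, where $\phi\in C_0^\infty$ and hence $\gA\phi\in L^2$. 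In exchange, your approach buys a short, self-contained proof that reduces the appendix theorem to the $L^2_{\mathrm{loc}}$ heat-semigroup diamagnetic inequality, where the paper leaves the reader to reconstruct this from the literature.
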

This is Kato's inequality for the Chandrasekhar operator (Simon
\cite{Simon1979K}, K\"onenberg et al \cite{Konenbergetal2012}).

\textit{Acknowledgment:} We thank Reinhold Egger for helpful comments
on the first draft of the manuscript. Partial financial support of the
DFG through SFB-TR 12 is acknowledged.  


\end{document}